\def\Tr{\operatorname{Tr}}
\def\({\left(}
\def\){\right)}
\def\[{\left[}
\def\]{\right]}
\newtheorem{lemma}{Lemma}
\newtheorem{observation}{Observation}
\newtheorem{proposition}{Proposition}
\def\>{\rangle}
\def\<{\langle}
\begin{document}

\widetext

\title{Quantum steering under constrained free-will}

\author{Abhishek Sadhu} \email{abhisheks@rri.res.in}
\affiliation{Centre for Quantum Science and Technology (CQST), Center for Security, Theory and Algorithmic Research (CSTAR), International Institute of Information Technology, Hyderabad, Gachibowli, Telangana 500032, India}
\affiliation{Raman Research Institute, Bengaluru, Karnataka 560080, India}

\author{Siddhartha Das}\email{das.seed@iiit.ac.in}
\affiliation{Centre for Quantum Science and Technology (CQST), Center for Security, Theory and Algorithmic Research (CSTAR), International Institute of Information Technology, Hyderabad, Gachibowli, Telangana 500032, India}

\date{\today}
\begin{abstract}
Quantum steering is a kind of bipartite quantum correlations where one party's measurement remotely alters the state of another party. In an adversarial scenario, there could be a hidden variable introducing a bias in the choice of measurement settings of the parties. However, observers without access to the hidden variable are unaware of this bias. The main focus of this work is to analyze quantum steering without assuming that the parties freely choose their measurement settings. For this, we introduce the measurement-dependent (MD-)steering scenario where the measurement settings chosen by the parties are biased by an adversary. In such a scenario, we present a class of inequalities to test for MD-steerable correlations. Further, we discuss the implications of violating such inequalities in certifying randomness from quantum extremal behaviors. We also assume that an adversary might prepare an assemblage as a mixture of MD-steerable and MD-unsteerable assemblages and provide a bound on the measurement dependence for the observed correlation to remain MD-steerable. 
\end{abstract}

\maketitle

\section{Introduction}
Quantum correlations are of wide interest from both fundamental as well as technological aspects~\cite{Schrodinger35,dowling2003quantum,CSW14,DBWH21,SSHD23}. Quantum steering~\cite{UCNG20,CS17} is a kind of quantum correlations between outcomes of measurements applied on one part of an entangled state and the post-measurement state left with the other party. Operationally, quantum steering can be seen as a test for entanglement where one of the party's measurements is uncharacterised~\cite{WJD07}. Quantum steering is a resource for different information processing tasks such as quantum key distribution~\cite{BCW+12,GHD+15}, randomness generation~\cite{LTB+14,PCS+15,SC+18,GCH+19,sarkar23}, quantum teleportation~\cite{Reid13,HZA15}, and quantum metrology~\cite{YFG21,LLM+23}. 

Schr{\"o}dinger~\cite{Schrodinger35} introduced the concept of quantum steering in response to the 1935 paper by Einstein, Podolsky, and Rosen (EPR)~\cite{EPR35}. EPR argued that in a correlated bipartite state, measuring an observable on one particle enables predicting the outcome of the same measurement on the other. Schr{\"o}dinger noted that by performing a measurement, the first party can \textit{steer} the state of the other party into an eigenstate of the measured observable. 

The study of EPR steering received considerable interest after Wiseman \textit{et al.,}~\cite{WJD07,JWD07} formalized steering for mixed states and provided an operational interpretation via the task of entanglement verification. As there is no assumption of trusted devices at the site of the steering party, it is often termed as one-sided device-independent scenario (1SDI) for entanglement verification. 
\begin{figure}[h]
    \centering
    \includegraphics[scale=0.24]{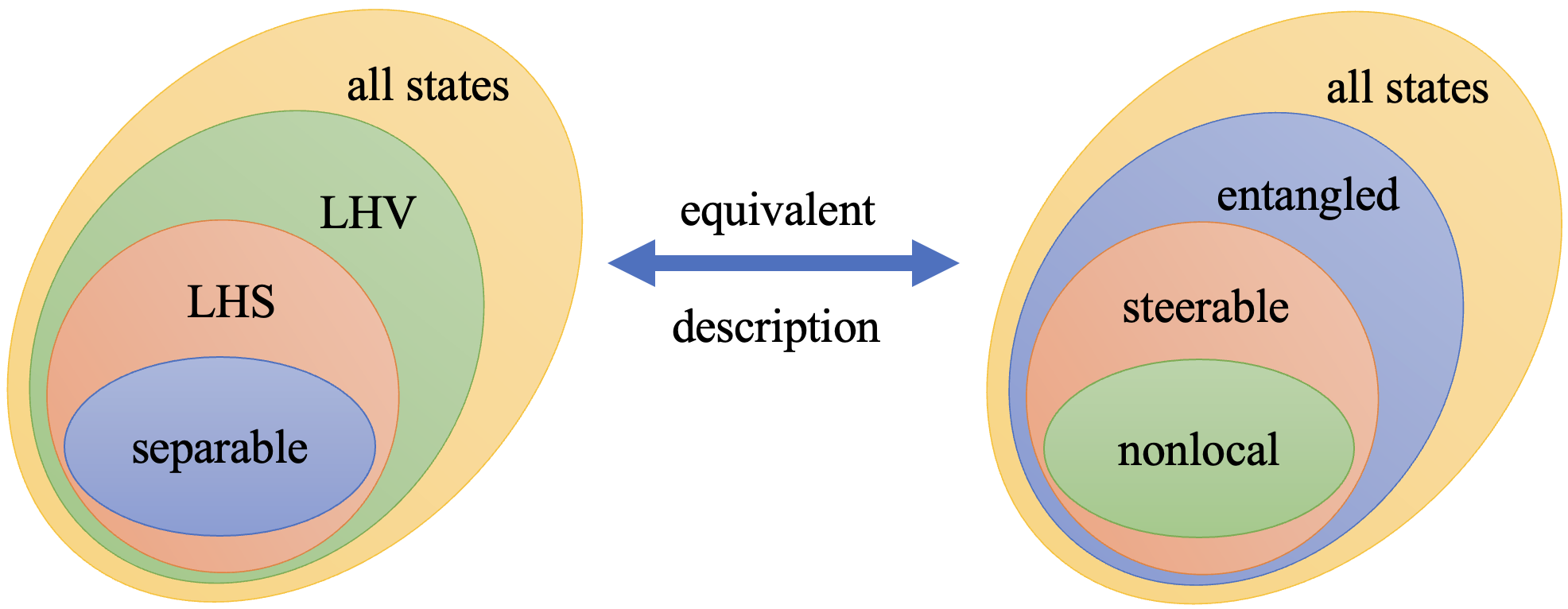}
    \caption{The relation between nonlocality, steering and entanglement: the set of states that has an LHV model, i.e., does not violate any Bell inequality form a convex subset of the set of all states. The set of states that has a LHS model, i.e., are steerable form a convex subset of the LHV states. Also, the separable states form a convex subset of the LHS states (see \cite{UCNG20} for details).}
    \label{fig:hierarchy}
\end{figure}
EPR steering is a form of quantum correlation that lies between entanglement~\cite{HHHH09} and Bell nonlocality~\cite{BCP+14,HSD15} (see Fig.~\ref{fig:hierarchy}). Steering is a strictly stronger form of correlation than entanglement and is strictly weaker than the violation of a Bell inequality~\cite{QVC+15}. Unlike entanglement and Bell nonlocality, steering is a directional form of correlation. The steerability from subsystem A to subsystem B may differ from the steerability in the opposite direction~\cite{WJD+08,MFO+10,HES+12}. The aspect of one-way steering has also been studied in the context of multipartite systems~\cite{HR13,JSU+21,CSA+15}. 

The free-will assumption\footnote{It should be noted that the free-will assumption is also known as measurement independence assumption. This refers to the correlation between the choice of measurement settings of the user that can affect the observed statistics.} in tests of local realistic hidden variable (LRHV) theories states that the users can choose the measurement settings freely or use uncorrelated random number generators. The authors of \cite{big2018challenging} have performed a test of LRHV theories using human-generated randomness in an attempt to close this loophole. To the best of our knowledge, the free-will assumption in a test of LRHV theories was first relaxed in \cite{Hall10}, where the authors use a distance-measure-based quantifier for measurement dependence. In another approach \cite{putz2014arbitrarily,putz2016measurement}, measurement dependence was quantified by bounding the probability of choosing the measurement settings to be in a given range. Following these two approaches, tests for LRHV models have been constructed~\cite{putz2014arbitrarily,ZRL+23,SD23}. However the consideration of measurement dependence in tests of other forms of quantum correlations such as quantum steering is still lacking. 

The main focus of this work is to study the implications of the measurement dependence assumption in the bipartite quantum steering scenario. We introduce measurement-dependent local hidden state (MD-LHS) models and present MD-quantum steering as the generation of ensembles that are not explainable by MD-LHS models. We show that the joint probability distribution obtained from performing local measurements on MD-LHS assemblage has an MD-LHV model. This implies that MD-unsteerable states are also MD-local, making it unsuitable for device-independent information processing tasks. We introduce measurement weight as a quantifier of the distinguishability of the measurement settings of one of the parties. Then, assuming that an observed assemblage of one of the parties may be expressed as a convex combination of steerable and MD-unsteerable assemblages, we present a bound on the measurement-dependent weight such that the observed assemblage is steerable. 

We also introduce an inequality to test for MD-LHV-LHS models and present the maximum achievable violation for bipartite qubit states. We also present the amount of violation of the inequality for PR box correlations. Finally, we discuss the implications and tradeoffs between the amount of violation of the MD-LHV-LHS inequality and measurement dependence parameter for the quantum behaviors that certify different amounts of DI-randomness.

The structure of the paper is as follows. In Sec.~\ref{sec:MD-steering-scenario}, we introduce the framework of quantum steering and measurement dependence, a constraint that limits the free-will of the user. We show that the measurement-dependent unsteerable states are also measurement-dependent local. We introduce measurement-dependent weight in Sec.~\ref{sec:MD-steering-weight} to quantify one-sided measurement dependence. We provide a bound on the measurement-dependent weight to ensure that the observed assemblage is steerable. In Sec.~\ref{sec:MD-LHV-LHS}, we present an inequality to test for MD-LHV-LHS models. We present the amount of violation of the MD-LHV-LHS inequality achievable by quantum states and PR box correlations. We present the amount of violation of the MD-LHV-LHS inequality as a function of the measurement dependence parameter for certifying different amounts of randomness in Sec.~\ref{sec:Randomness}. We provide concluding remarks in Sec.~\ref{sec:discussion}.

\section{The MD-steering Scenario} \label{sec:MD-steering-scenario}
Consider the Bell scenario where two parties, Alice and Bob, perform measurements $\{x\}_x$ and $\{y\}_y$ respectively on their share of a bipartite quantum state $\rho_{AB}$ (see Fig.~\ref{fig:steeringScenario}).
\begin{figure}[h]
    \centering
    \includegraphics[scale=0.25]{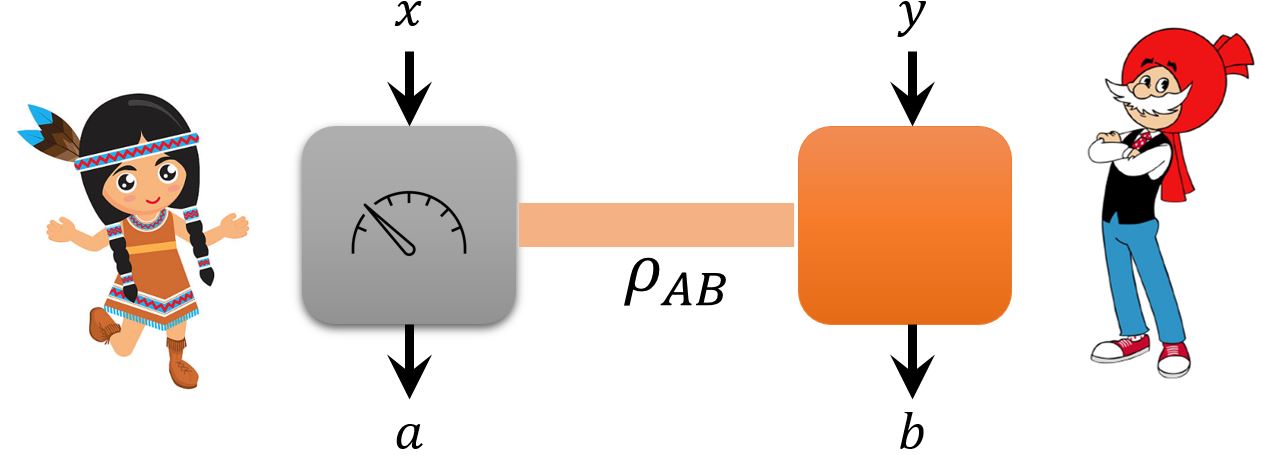}
    \caption{The Bell scenario with two parties, namely Alice and Bob, performing measurements $x$ and $y$ respectively on their share of a bipartite quantum state $\rho_{AB}$. After measuring $x$ and receiving outcome $a$, the state of Bob is $\rho_{a|x}$ with a probability $p(a|x)$. We assume there is no characterisation of Alice's measurement while Bob has full control over his measurement settings and can access the conditional state $\rho_{a|x}$.}
    \label{fig:steeringScenario}
\end{figure}
After Alice has measured $x$ and received outcome $a$, the state of Bob is reduced to $\rho_{a|x}$ with a probability of $p(a|x)$. We consider the scenario where the device of Alice is not characterised while Bob has full control over his measurement settings and can access the conditional state $\rho_{a|x}$. This scenario is characterised by a set of unnormalised quantum states $\{\sigma_{a|x}\}_{a,x}$, where $\sigma_{a|x} = p(a|x) \rho_{a|x}$ is called an assemblage and is given by
\begin{equation}
    \sigma_{a|x} = \Tr_A [(\Lambda_a^x \otimes \mathbbm{1})\rho_{AB}].
\end{equation}
\textit{Measurement-dependent (MD-)quantum steering} can be described as the generation of ensembles that are not explainable by \textit{measurement-dependent local hidden state (MD-LHS)} models.
%~\cite{WJD07}. 

In an MD-LHS model, a source sends a classical message $\lambda$ to Alice, distributed according to $p(\lambda|x)$, while Bob receives the corresponding quantum state $\rho_{\lambda|x}$. On measuring $x$ on her subsystem, Alice obtains outcome $a$ with probability $p(a|x\lambda)$. After Alice has performed her measurement, Bob observes the assemblage element
\begin{equation} \label{eq:steeringDef}
    \sigma_{a|x}^{\text{MD-LHS}} = \int d\lambda p(\lambda|x) p(a|x\lambda) \rho_{\lambda|x}.
\end{equation}
The assemblage elements of Bob reduce to the expression for measurement independence scenario with $p(\lambda|x) = p(\lambda)$ and $\rho_{\lambda|x} = \rho_{\lambda}$. The assemblage elements have the following properties: (a) \textit{Positivity}: $\sigma_{a|x}^{\text{MD-LHS}} \geq 0$ and (b) \textit{Normalization}: $\sum_a \Tr \sigma_{a|x}^{\text{MD-LHS}} = 1~\forall x$. The assemblages that cannot be described in the form of Eq.~\eqref{eq:steeringDef} are called \textit{MD-steerable}. 

\begin{lemma} \label{prop:steerNonlocal}
    The probability distribution obtained from local measurements on an MD-LHS assemblage has an MD-LHV model.
\end{lemma}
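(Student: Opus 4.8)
The plan is to adapt the standard argument that a local-hidden-state assemblage produces Bell-local statistics, paying attention to where the measurement dependence has to be stored. Let Bob apply an arbitrary POVM $\{M_b^y\}_b$ (with $M_b^y\ge 0$ and $\sum_b M_b^y=\mathbbm{1}$) to the assemblage element of Eq.~\eqref{eq:steeringDef}. The resulting joint statistics are
\[
p(a,b|x,y)=\Tr\!\big[M_b^y\,\sigma_{a|x}^{\text{MD-LHS}}\big]
=\int d\lambda\; p(\lambda|x)\,p(a|x\lambda)\,\Tr\!\big[M_b^y\rho_{\lambda|x}\big].
\]
Since each $\rho_{\lambda|x}$ is a normalized quantum state, for every fixed $x$, $\lambda$, $y$ the numbers $\Tr[M_b^y\rho_{\lambda|x}]$ form a legitimate probability distribution over $b$, so the integrand above already resembles a hidden-variable decomposition of $p(a,b|x,y)$ — except that $\Tr[M_b^y\rho_{\lambda|x}]$ still carries a dependence on Alice's setting $x$, which a local response function may not have.

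The key step is therefore the choice of hidden variable: instead of $\lambda$ I would use the enlarged variable $\mu=(\lambda,\varrho)$, where $\varrho$ ranges over density operators on Bob's Hilbert space, with conditional law $q(\mu|x,y)$ obtained by drawing $\lambda\sim p(\lambda|x)$ and then fixing $\varrho=\rho_{\lambda|x}$ deterministically. Define the response functions $q_A(a|x,\mu):=p(a|x\lambda)$, depending only on $x$ and the $\lambda$-component of $\mu$, and $q_B(b|y,\mu):=\Tr[M_b^y\varrho]$, depending only on $y$ and the $\varrho$-component of $\mu$; both are non-negative and correctly normalized (using $\sum_a p(a|x\lambda)=1$, $\sum_b M_b^y=\mathbbm{1}$, and $\Tr\varrho=1$).

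It then remains to integrate out $\varrho$ against its deterministic conditional law, which gives
\[
\int d\mu\; q(\mu|x,y)\,q_A(a|x,\mu)\,q_B(b|y,\mu)
=\int d\lambda\; p(\lambda|x)\,p(a|x\lambda)\,\Tr[M_b^y\rho_{\lambda|x}]
=p(a,b|x,y),
\]
which is exactly an MD-LHV decomposition — indeed a one-sided one, since $q(\mu|x,y)$ does not depend on $y$ — and the corresponding statements for $p(a|x)$ and $p(b|y)$ follow by marginalizing. The step I expect to require the most care, and the one I would state explicitly rather than bury in the change-of-variable bookkeeping, is the observation that the $x$-dependence of Bob's hidden state $\rho_{\lambda|x}$ has to be absorbed into the hidden-variable distribution $q(\mu|x,y)$ and not into Bob's response function: this is precisely the point at which a measurement-independent model would be impossible, so it is exactly where the ``measurement-dependent'' qualifier of MD-LHV is doing the work. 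Everything else is a routine substitution.
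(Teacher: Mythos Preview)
Your argument is correct and follows the same line as the paper's: apply Bob's measurement to the MD-LHS assemblage, pull the trace through the $\lambda$-integral, and read off a factorized integrand $p(\lambda|x)\,p(a|x\lambda)\,\Tr[M_b^y\rho_{\lambda|x}]$. The paper stops there, writing the last factor as $p(b|y\lambda)_{\rho_{\lambda|x}}$ and declaring Eq.~\eqref{eq:steerNonlocal} an MD-LHV model without further comment. You go one step further and explicitly absorb the residual $x$-dependence of $\rho_{\lambda|x}$ into an enlarged hidden variable $\mu=(\lambda,\varrho)$ so that Bob's response $q_B(b|y,\mu)=\Tr[M_b^y\varrho]$ is manifestly independent of $x$; this is more careful than the paper's treatment, not a different method, and your remark that this is exactly where the ``measurement-dependent'' qualifier earns its keep is on point.
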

\begin{proof}
The probability distribution $\textbf{P} = \{p(ab|xy)\}_{a,b,x,y}$ obtained after Bob has performed a local measurement on the MD-LHS assemblage $\sigma_{a|x}^{\text{MD-LHS}}$ is given by
\begin{eqnarray}
p(ab|xy) &&= \Tr[\Lambda_b^y \sigma_{a|x}^{\text{MD-LHS}}] \\
&&= \Tr[\Lambda_b^y \int d\lambda p(\lambda|x) p(a|x\lambda) \rho_{\lambda|x}] \\
&&= \int d\lambda p(\lambda|x) p(a|x\lambda) \Tr[\Lambda_b^y \rho_{\lambda|x}] \\
&&= \int d\lambda p(\lambda|x) p(a|x\lambda) p(b|y\lambda)_{\rho_{\lambda|x}}, \label{eq:steerNonlocal}
\end{eqnarray}
where $p(\lambda|x)$ denotes the one-sided measurement dependence of Alice. It follows from Eq.~\eqref{eq:steerNonlocal} that probability distribution $\textbf{P}$ has an MD-LHV model. 
\end{proof}
The behavior $\textbf{P}$ which can be expressed in the form of Eq.~\eqref{eq:steerNonlocal} has an \textit{MD-LHV-LHS model}. It follows from Lemma~\ref{prop:steerNonlocal} that MD-unsteerable states are also MD-local and cannot be used for device-independent information processing tasks. 

\section{Measurement dependence and steering weight} \label{sec:MD-steering-weight}
In this section, we introduce \textit{measurement-dependent weight} to quantify the measurement dependence of one of the parties in the bipartite steering scenario. We also provide a bound on the measurement-dependent weight to ensure that the observed behavior is MD-steerable. 

Consider the scenario where Alice is trying to prepare Bob's assemblage $\{\sigma_{a|x}\}_{a,x}$. She can prepare as frequently as possible an MD-unsteerable assemblage $\{\sigma_{a|x}^{\text{MD-LHS}}\}_{a,x}$, and also sometimes a steerable assemblage $\{\gamma_{a|x}^{\text{st}}\}_{a,x}$ such that on average the assemblage is steerable. Taking motivation from~\cite{skrzypczyk2014quantifying}, we can decompose the elements of the assemblage created by Alice as 
\begin{equation} \label{eq:steringMixture}
    \sigma_{a|x} = (1-\eta^{a|x})~\gamma_{a|x}^{\text{st}} + \eta^{a|x}~\sigma_{a|x}^{\text{MD-LHS}},
\end{equation}
where $a \in \{+1,-1\}$ and $x \in \{x_1,x_2\}$. The parameter $\eta^{a|x}\in(0,1)$\footnote{It follows from Eq.~\eqref{eq:steringMixture} that for $\eta^{a|x} = 1$ we have no contribution from the steerable assemblage $\gamma_{a|x}^{\text{st}}$ and $\sigma_{a|x} = \sigma_{a|x}^{\text{MD-LHS}}$, while for $\eta^{a|x} = 0$ we have and $\sigma_{a|x} = \gamma_{a|x}^{\text{st}}$.} quantifies the weight of $\sigma_{a|x}^{\text{MD-LHS}}$ in the mixture. 

Measurement independence of Alice implies no extra information about the hidden variable $\lambda$ can be obtained from the knowledge of her choice of measurement settings\footnote{Measurement independence of Alice imposes the condition: $p(\lambda|x_1) = p(\lambda|x_2) = p(\lambda)$, which is equivalent to saying \begin{eqnarray}
    p(x|\lambda) = \frac{p(\lambda|x)p(x)}{p(\lambda)} = p(x).
\end{eqnarray}
In other words, Alice has complete freedom in choosing her measurement setting.}. To quantify distinguishability between $p(\lambda|x_1)$ and $p(\lambda|x_2)$, we introduce the \textit{measurement-dependent weight} $\mathscr{W}$ for Alice as 
\begin{eqnarray}
    \mathscr{W} = \int d\lambda \bigg( p(\lambda|x_1) - \frac{\eta^{-|x_2}}{\eta^{-|x_1}} p(\lambda|x_2) \bigg). \label{eq:md-para}
\end{eqnarray}
If Alice has complete measurement independence, we have $\mathscr{W} = 1 - (\eta^{-|x_2}/\eta^{-|x_1})$. 

Following~\cite{putz2014arbitrarily,putz2016measurement}, we constrain the correlation between the choice of the measurement setting of Alice and the hidden variable $\lambda$ by imposing lower and upper bounds on the conditional probability distribution 
\begin{eqnarray}
    l \leq p(x|\lambda) \leq 1-l~~ \text{with} ~~ l \in [0,0.5]. \label{eq:MD_para_Alice}     
\end{eqnarray}
Consider the case of $l = 0.5$. It follows that $p(x|\lambda) = p(x) = 0.5$, which implies Alice has measurement independence. Other values of $l$ represent bias in Alice's choice of measurement settings. We next present values of the measurement-dependent weight $\mathscr{W}$ for the limiting values of $p(x|\lambda)$.
\begin{observation}
The value of the measurement-dependent weight $\mathscr{W}$ for limiting values of $p(x|\lambda)$ (see Eq.~\eqref{eq:MD_para_Alice}) is given by 
\begin{align}
    &&\mathscr{W} \coloneqq \begin{cases}
        ~\frac{l}{p(x_1)} - \frac{\eta^{-|x_2}(1-l)}{\eta^{-|x_1} (1-p(x_1))} & \text{if $p(x_1|\lambda) = l$}, \\
        \frac{1-l}{p(x_1)} - \frac{l~\eta^{-|x_2} }{\eta^{-|x_1} (1-p(x_1))} & \text{if $p(x_1|\lambda) = 1-l$}. 
    \end{cases}
\end{align}
\end{observation}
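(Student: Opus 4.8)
The plan is to rewrite $\mathscr{W}$ in terms of the bias $p(x|\lambda)$ via Bayes' rule and then substitute the two boundary values permitted by Eq.~\eqref{eq:MD_para_Alice}. Using $p(\lambda|x) = p(x|\lambda)\,p(\lambda)/p(x)$, each of the two integrals appearing in Eq.~\eqref{eq:md-para} becomes $\int d\lambda\, p(\lambda|x) = \frac{1}{p(x)}\int d\lambda\, p(x|\lambda)\,p(\lambda)$, which reduces the problem to evaluating $\int d\lambda\, p(x|\lambda)\,p(\lambda)$ at the extremal biases.

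First I would use that, with only two settings, $p(x_1|\lambda) + p(x_2|\lambda) = 1$ for every $\lambda$, so pinning $p(x_1|\lambda)$ to a boundary value of Eq.~\eqref{eq:MD_para_Alice} pins $p(x_2|\lambda)$ to the complementary one. For $p(x_1|\lambda) = l$ this gives $p(x_2|\lambda) = 1-l$; plugging into the displayed formula and using the normalization $\int d\lambda\, p(\lambda) = 1$ yields $\int d\lambda\, p(\lambda|x_1) = l/p(x_1)$ and $\int d\lambda\, p(\lambda|x_2) = (1-l)/(1-p(x_1))$. Substituting these together with the fixed constant factor $\eta^{-|x_2}/\eta^{-|x_1}$ into Eq.~\eqref{eq:md-para} produces the first branch. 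The case $p(x_1|\lambda) = 1-l$ is handled identically up to the swap $l \leftrightarrow 1-l$ and produces the second branch.

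The computation is essentially routine, so I do not expect a real obstacle; the only point requiring care is interpreting what "limiting values of $p(x|\lambda)$" means. Since Eq.~\eqref{eq:MD_para_Alice} is a pointwise constraint in $\lambda$, saturating it means $p(x_1|\lambda)$ equals the boundary value ($l$ or $1-l$) for (almost) every $\lambda$, which is exactly what lets the $\lambda$-integral collapse onto the normalization of $p(\lambda)$ and strips away any dependence on the detailed shape of the underlying LHS ensemble $\rho_{\lambda|x}$. Keeping track of the ratio $\eta^{-|x_2}/\eta^{-|x_1}$ as a constant pulled outside the integral throughout then gives the stated piecewise formula directly.
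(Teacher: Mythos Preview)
Your approach is correct and is precisely the straightforward Bayes-rule computation the paper has in mind; the paper states this Observation without proof, so there is nothing further to compare. One small remark worth making explicit in your write-up: once you assume $p(x_1|\lambda)=l$ for all $\lambda$, the law of total probability forces $p(x_1)=\int d\lambda\, p(x_1|\lambda)p(\lambda)=l$ (and analogously in the other branch), so the displayed expressions are consistent with the normalization $\int d\lambda\, p(\lambda|x_j)=1$ only under that identification.
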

Next, we establish an upper bound on $\mathscr{W}$ that depends on the weight of the MD-LHS assemblage $\sigma_{a|x}^{\text{MD-LHS}}$ such that $\sigma_{a|x}$ is MD-steerable (see Appendix~\ref{app:steering-weight} for proof).
\begin{proposition} \label{prop:weight}
The assemblage $\sigma_{a|x}$ in Eq.~\eqref{eq:steringMixture} is MD-steerable if the measurement-dependent weight $\mathscr{W}$ is bounded by 
\begin{eqnarray}
    &&\mathscr{W} \leq \frac{1}{\eta^{-|x_1}} \bigg[ p(+|x_2)(\eta^{+|x_2} + \eta^{-|x_2}) \nonumber \\
    &&\qquad \qquad \qquad \qquad - p(+|x_1)(\eta^{+|x_1} + \eta^{-|x_1})\bigg]. \label{eq:weight}
\end{eqnarray}
In the limiting case of $\eta^{a|x} = \eta~\forall a,x$, we have $\mathscr{W} \leq 2 [p(+|x_2) - p(+|x_1)]$.
\end{proposition}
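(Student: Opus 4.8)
The plan is to argue by contraposition: assuming the mixture $\sigma_{a|x}$ of Eq.~\eqref{eq:steringMixture} is \emph{not} MD-steerable, I will show that the measurement-dependent weight $\mathscr{W}$ is forced to exceed the right-hand side of Eq.~\eqref{eq:weight}. Suppose $\sigma_{a|x}$ admits some MD-LHS model. Equating it with the decomposition $\sigma_{a|x} = (1-\eta^{a|x})\gamma^{\text{st}}_{a|x} + \eta^{a|x}\sigma^{\text{MD-LHS}}_{a|x}$ and using that $\gamma^{\text{st}}_{a|x}$ is a \emph{genuinely} MD-steerable assemblage (so it cannot be re-absorbed into an MD-LHS description), the contradiction comes from the fact that, once $\mathscr{W}$ is small enough, the operator $\sigma_{a|x} - \eta^{a|x}\sigma^{\text{MD-LHS}}_{a|x}$ still lies in the MD-LHS cone, which would make $\gamma^{\text{st}}_{a|x}$ MD-unsteerable.

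First I would reduce everything to outcome marginals. Taking the trace of Eq.~\eqref{eq:steringMixture} gives $p(a|x) = (1-\eta^{a|x})\Tr\gamma^{\text{st}}_{a|x} + \eta^{a|x}\Tr\sigma^{\text{MD-LHS}}_{a|x}$, and from Eq.~\eqref{eq:steeringDef}, $\Tr\sigma^{\text{MD-LHS}}_{a|x} = \int d\lambda\, p(\lambda|x)\,p(a|x\lambda)$. Combining the $a=+$ and $a=-$ relations at a fixed $x$ with the assemblage normalizations $\sum_a \Tr\sigma_{a|x} = \sum_a \Tr\gamma^{\text{st}}_{a|x} = \sum_a \Tr\sigma^{\text{MD-LHS}}_{a|x}=1$ lets me express $\Tr\sigma^{\text{MD-LHS}}_{-|x}$ in terms of $p(+|x)$, the weights $\eta^{\pm|x}$, and $\Tr\gamma^{\text{st}}_{\pm|x}$ only.

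Next I would substitute these into the definition Eq.~\eqref{eq:md-para} of $\mathscr{W}$. The key simplifications are $\int d\lambda\, p(\lambda|x)=1$ for each $x$ (so the bare integrals are fixed) and $\sum_a p(a|x\lambda)=1$, which lets the ratio $\eta^{-|x_2}/\eta^{-|x_1}$ of Eq.~\eqref{eq:md-para} combine with the $a=-$ marginals at $x_1$ and $x_2$ into a single expression; the asymmetric roles of $x_1,x_2$ and of the outcome $-$ in Eq.~\eqref{eq:weight} arise precisely from this bookkeeping. This yields $\mathscr{W}$ as an affine function of the unknown traces $\Tr\gamma^{\text{st}}_{\pm|x_1},\Tr\gamma^{\text{st}}_{\pm|x_2}\in[0,1]$; minimizing over these free parameters — discarding the degenerate boundary choices that are incompatible with $\gamma^{\text{st}}$ being genuinely MD-steerable — gives the threshold $\frac{1}{\eta^{-|x_1}}\big[p(+|x_2)(\eta^{+|x_2}+\eta^{-|x_2}) - p(+|x_1)(\eta^{+|x_1}+\eta^{-|x_1})\big]$, below which an MD-unsteerable $\sigma_{a|x}$ cannot drive $\mathscr{W}$. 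The limiting statement then follows by setting $\eta^{a|x}=\eta$: the factor $\eta$ cancels out of the bracket against $1/\eta^{-|x_1}=1/\eta$, leaving $2[p(+|x_2)-p(+|x_1)]$.

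The main obstacle I anticipate is the "genuine steerability" input flagged above: positivity and normalization of $\gamma^{\text{st}}_{a|x}$ alone only certify that it is a valid assemblage, not a steerable one, so I must argue that in the regime where $\mathscr{W}$ lies below the threshold of Eq.~\eqref{eq:weight} any putative MD-LHS model for $\sigma_{a|x}$ forces $\sigma_{a|x}-\eta^{a|x}\sigma^{\text{MD-LHS}}_{a|x}$ to remain inside the MD-LHS cone — e.g.\ by exhibiting a dominating measurement-dependent distribution $\tilde p(\lambda|x)\geq \eta^{a|x}p(\lambda|x)$, whose existence is exactly what a bound on $\mathscr{W}$ (equivalently, on $p(x|\lambda)$ via Eq.~\eqref{eq:MD_para_Alice}) guarantees. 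Carrying out this cone-membership argument, and keeping the inequality directions and the $x_1\leftrightarrow x_2$, $+\leftrightarrow-$ labels straight through the substitutions, is the delicate part; the rest is routine linear algebra on traces.
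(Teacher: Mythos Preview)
Your plan diverges substantially from the paper's argument, and the divergence introduces the very obstacles you flag at the end.

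The paper does \emph{not} argue by contraposition, does \emph{not} assume or use anywhere that $\sigma_{a|x}$ has (or fails to have) an MD-LHS model, and never appeals to the ``genuine steerability'' of $\gamma^{\text{st}}_{a|x}$. The only property of $\gamma^{\text{st}}_{a|x}$ that enters is its \emph{positivity}, which is part of the data in Eq.~\eqref{eq:steringMixture}. From $\gamma^{\text{st}}_{a|x}\geq 0$ one has immediately $\sigma_{a|x}\geq \eta^{a|x}\sigma^{\text{MD-LHS}}_{a|x}$; tracing and inserting Eq.~\eqref{eq:steeringDef} gives
\[
p(a|x)\;\geq\;\eta^{a|x}\!\int d\lambda\,p(\lambda|x)\,p(a|x\lambda)
\]
for each of the four pairs $(a,x)$. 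The paper then combines these four scalar inequalities directly: form the two differences for $a=+$ and $a=-$ across $x_1,x_2$, subtract, use $p(-|x_j\lambda)=1-p(+|x_j\lambda)$, and regroup so that the term $\eta^{-|x_1}\int d\lambda\big(p(\lambda|x_1)-\tfrac{\eta^{-|x_2}}{\eta^{-|x_1}}p(\lambda|x_2)\big)=\eta^{-|x_1}\mathscr{W}$ is isolated. The remaining integrals collapse via $\int d\lambda\,p(+,\lambda|x_j)=p(+|x_j)$, yielding Eq.~\eqref{eq:weight}.

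In particular, there is no optimization over $\Tr\gamma^{\text{st}}_{\pm|x}$: those traces never appear as free parameters because the trace of $\gamma^{\text{st}}_{a|x}$ is simply dropped (it is nonnegative), not parametrized. Likewise, there is no ``MD-LHS cone'' argument and no need to exhibit a dominating distribution $\tilde p(\lambda|x)$. The obstacle you anticipate --- justifying that $\sigma_{a|x}-\eta^{a|x}\sigma^{\text{MD-LHS}}_{a|x}$ stays in some cone --- is an artifact of the contraposition strategy; in the paper's direct route it never arises. Your trace-reduction step is the right starting ingredient, but you should feed it into an inequality chain rather than an optimization, and you should discard the contrapositive framing altogether.
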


\section{Test for MD-LHV-LHS models} \label{sec:MD-LHV-LHS}
In this section, we present an inequality to test for MD-LHV-LHS  models in the simplest case of two parties with binary inputs and outputs on each side. We show that there exists a class of bipartite qubit states that violate the MD-LHV-LHS inequality. We present the amount of violation of the MD-LHV-LHS inequality for different classes of nonlocal behaviors. 

The behavior $\textbf{P} = \{p(ab|xy)\}_{a,b,x,y}$ obtained from the joint measurement on the bipartite state $\rho_{AB}$ shared between Alice and Bob has an MD-LHV-LHS model (see Lemma~\ref{prop:steerNonlocal}) if the joint probability distribution can be written as
\begin{equation} \label{eq:steeringInequality1}
    p(ab|xy) = \sum_{\lambda} p(\lambda|xy) p(a|x\lambda) p(b|y\lambda)_{\rho_{\lambda|x}},
\end{equation}
where the hidden variable $\lambda$ determines a local quantum state $\rho_{\lambda|x}$ for Bob and $p(b|y\lambda)_{\rho_{\lambda|x}}$ is the probability of outcome $b$ if $y$ is measured on $\rho_\lambda$. 

We now present an inequality whose validation certifies the existence of an MD-LHV-LHS model for $\textbf{P}$ (see Appendix~\ref{app:MD-LHV-LHS} for proof).
\begin{proposition}
    The behavior obtained from joint-dichotomic measurements on the bipartite state shared by Alice and Bob has an MD-LHV-LHS model if
\begin{equation}
    I_{\alpha_1}^{\alpha_2}\coloneqq\sqrt{\alpha_1} + \sqrt{\alpha_2} \leq 4p(1-p),\label{eq:steeringInequalityBob1}
\end{equation}
where $p \in [0,0.5]$ is the measurement dependence parameter of Alice and $\alpha_1$ and $\alpha_2$ are given by 
\begin{eqnarray}
    &&\alpha_1 = (p\langle x_1y_1\rangle + (1-p) \langle x_2y_1\rangle)^2 + (p\langle x_1y_2\rangle \nonumber \\
    && \qquad \qquad \qquad \qquad \qquad  + (1-p) \langle x_2y_2\rangle)^2, \\
    &&\alpha_2 = (p\langle x_1y_1\rangle - (1-p) \langle x_2y_1\rangle)^2 + (p\langle x_1y_2\rangle \nonumber \\
    &&\qquad \qquad \qquad  \qquad \qquad - (1-p) \langle x_2y_2\rangle)^2. 
\end{eqnarray}
\end{proposition}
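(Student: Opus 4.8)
The plan is to establish the stated \emph{sufficiency} direction: whenever the correlators of $\textbf{P}$ obey $\sqrt{\alpha_1}+\sqrt{\alpha_2}\le 4p(1-p)$, I will \emph{construct} an explicit MD-LHV-LHS model of the form in Eq.~\eqref{eq:steeringInequality1} that reproduces $\textbf{P}$. By Lemma~\ref{prop:steerNonlocal} it is enough to build an MD-LHS assemblage $\{\sigma_{a|x}\}$ for Bob, together with Alice's biased response functions $p(a|x\lambda)$, such that Bob's measurements $y_1,y_2$ on this assemblage return exactly the four correlators $\langle x_iy_j\rangle$; measuring any such assemblage automatically yields a behavior of the required form. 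The problem thus reduces from matching the full behavior to matching only the correlators while keeping every hidden state positive and the $\lambda$-distribution a legitimate conditional probability respecting Alice's bias $p$.

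First I would move to the Bloch representation, writing Bob's two dichotomic observables as spin projections along directions $\hat{m}_1,\hat{m}_2$ and each hidden state $\rho_\lambda$ through its Bloch vector $\vec{r}_\lambda$. The model correlators then read $\langle x_iy_j\rangle=\sum_\lambda p(\lambda|x_i)\,\langle a\rangle_{x_i\lambda}\,(\hat{m}_j\cdot\vec{r}_\lambda)$, with Alice's mean response $\langle a\rangle_{x_i\lambda}=\sum_a a\,p(a|x_i\lambda)\in[-1,1]$. Because Alice has binary inputs and outputs, I may restrict to her deterministic extremal strategies, over which the achievable behaviors form a convex hull; the bias enters only through the window $l\le p(x|\lambda)\le 1-l$ of Eq.~\eqref{eq:MD_para_Alice}, which I would rewrite in terms of the single parameter $p$ of the statement.

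The central step is to separate the symmetric and antisymmetric combinations of the biased correlators, $A_j:=p\langle x_1y_j\rangle+(1-p)\langle x_2y_j\rangle$ and $B_j:=p\langle x_1y_j\rangle-(1-p)\langle x_2y_j\rangle$, so that $\alpha_1=A_1^2+A_2^2$ and $\alpha_2=B_1^2+B_2^2$. I would then show that these two pieces are produced by the two deterministic Alice strategies (correlated and anticorrelated with $\lambda$), each feeding one hidden qubit whose Bloch vector lies along the planar direction fixed by $(A_1,A_2)$, respectively $(B_1,B_2)$. Optimising $\hat{m}_j\cdot\vec{r}_\lambda$ over Bob's directions turns the two contributions into the Euclidean norms $\sqrt{\alpha_1}$ and $\sqrt{\alpha_2}$ by Cauchy--Schwarz, while the lengths the construction demands of the two Bloch vectors are proportional to $\sqrt{\alpha_1}$ and $\sqrt{\alpha_2}$ with a factor fixed by $p(1-p)$. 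Positivity of the hidden states, $|\vec{r}_\lambda|\le 1$, then holds precisely when $\sqrt{\alpha_1}+\sqrt{\alpha_2}\le 4p(1-p)$; it remains to check normalisation, $\sum_a\Tr\sigma_{a|x}=1$, and that the assigned $p(\lambda|x)$ are genuine probabilities inside the bias window.

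The main obstacle I anticipate is the bookkeeping that pins the measurement-dependence factor to the exact right-hand side $4p(1-p)$: the hidden-variable weights $p(\lambda|x_1),p(\lambda|x_2)$ and the deterministic responses must be allocated so that $A_j$ and $B_j$ emerge with the correct $p$- and $(1-p)$-prefactors while both Bloch vectors stay inside the unit ball. Showing that a \emph{single} assignment simultaneously matches all four correlators --- rather than each one separately --- is the delicate point, and it is exactly there that the two-norm structure $\sqrt{\alpha_1}+\sqrt{\alpha_2}$ (as opposed to $\sqrt{\alpha_1+\alpha_2}$) must fall out of the two-strategy decomposition.
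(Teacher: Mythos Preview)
Your plan attacks the wrong implication. Read in context, the paper uses the inequality as a \emph{necessary} condition for an MD-LHV-LHS description: the proof in Appendix~\ref{app:MD-LHV-LHS} starts from the decomposition \eqref{eq:steeringInequality1}, passes to extremal points (Alice's four deterministic strategies $\chi_1,\dots,\chi_4$ and Bob's pure hidden qubit $\ket{\psi_\xi}$ parametrised on the ellipse $\cos(\xi\pm\beta)$), tabulates the resulting correlators, and then shows that every convex mixture of those columns lies inside the region $\sqrt{\alpha_1}+\sqrt{\alpha_2}\le 4p(1-p)$. In other words, the logical flow is ``model $\Rightarrow$ inequality'', which is what one needs so that a \emph{violation} witnesses the absence of an MD-LHV-LHS model. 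The sentence after the proposition makes this explicit: ``For MD-LHV-LHS models, we observe \dots that $I_{\alpha_1}^{\alpha_2}$ is upper bounded by $I_{\rm loc}=4p(1-p)$.''

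You are instead trying to prove the converse, ``inequality $\Rightarrow$ model'', by explicitly constructing an MD-LHS assemblage whenever the bound is satisfied. That is a strictly harder claim, and nothing in the paper asserts it; the inequality is derived for the special choice $\beta=\pi/4$ and under the simplifying identifications $p(1|\chi_j\xi)=p_1$, $p(2|\chi_j\xi)=p_2$, so there is no reason to expect it to be tight. Concretely, your construction hinges on placing two Bloch vectors of lengths proportional to $\sqrt{\alpha_1}$ and $\sqrt{\alpha_2}$ inside the unit ball, but matching all four correlators with a single $p(\lambda|x)$ that also respects the window in Eq.~\eqref{eq:MD_para_Alice} is exactly the step you flag as ``delicate'', and you give no mechanism that forces it to close; for generic $\beta\neq\pi/4$ it will not. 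To align with the paper you should reverse direction: assume the model, write the correlators as a convex combination over $(\chi,\xi)$ using Bayes' rule $p(\chi\xi|xy)=p(\chi\xi)p(xy|\chi\xi)/p(xy)$ together with Bob's measurement independence $p(y|x\chi\xi)=\tfrac12$, and then bound the resulting two elliptical pieces by $\sqrt{f_1}+\sqrt{f_2}\le\max\{\sqrt{r_1},\sqrt{r_2}\}$ via the convex-hull argument the appendix carries out.
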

For MD-LHV-LHS models, we observe from Eq.~\eqref{eq:steeringInequalityBob1} that $I_{\alpha_1}^{\alpha_2}$ is upper bounded by $I_{\text{loc}} \coloneqq 4p(1-p)$. In the limiting case of $p = 0.5$, Eq.~\eqref{eq:steeringInequalityBob1} reduces to that obtained in \cite{girdhar2016all} and $I_{\text{loc}} = 1$. Also, in the limit of $p \rightarrow 0$ we observe that $I_{\text{loc}} \rightarrow 0$. We plot the variation of $I_{\text{loc}}$ as a function of $p$ in Fig.~\ref{fig:critp}. Next, we observe the amount of violation of the MD-LHV-LHS inequality for different classes of nonlocal behaviors. Next, we show that there exist quantum states that violate the MD-LHV-LHS inequality. 
\begin{proposition} \label{prop:quantumStates}
    There exist quantum states that violate the MD-LHV-LHS inequality.
\end{proposition}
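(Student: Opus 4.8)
The plan is to prove the statement by exhibiting a single explicit quantum witness, so that no optimization is strictly needed. I would use the canonical CHSH-type configuration: the two-qubit maximally entangled state $\ket{\Phi^+}=(\ket{00}+\ket{11})/\sqrt{2}$, Alice's observables $x_1=\sigma_z$ and $x_2=\sigma_x$, and Bob's observables $y_1=(\sigma_z+\sigma_x)/\sqrt{2}$ and $y_2=(\sigma_z-\sigma_x)/\sqrt{2}$. The quantum behavior $\textbf{P}=\{p(ab|xy)\}$ generated this way (with Alice selecting her settings in the usual unbiased fashion) has correlators $\langle x_iy_j\rangle=\bra{\Phi^+}x_i\otimes y_j\ket{\Phi^+}$, and since $\textbf{P}$ is a genuine quantum behavior, Eq.~\eqref{eq:steeringInequalityBob1} must hold whenever $\textbf{P}$ admits an MD-LHV-LHS model with measurement-dependence parameter $p$. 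It therefore suffices to find a single $p\in[0,0.5]$ with $I_{\alpha_1}^{\alpha_2}>4p(1-p)$.

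The next step is the short computation. The correlation matrix of $\ket{\Phi^+}$ is $\mathrm{diag}(1,-1,1)$, so $\langle x_1y_1\rangle=\langle x_2y_1\rangle=\langle x_1y_2\rangle=1/\sqrt{2}$ and $\langle x_2y_2\rangle=-1/\sqrt{2}$. Substituting into the definitions of $\alpha_1$ and $\alpha_2$, the terms combine neatly, $\alpha_1=\alpha_2=\tfrac12\big(1+(2p-1)^2\big)$, whence
\begin{equation}
    I_{\alpha_1}^{\alpha_2}=2\sqrt{\tfrac12\big(1+(2p-1)^2\big)}=\sqrt{2+2(2p-1)^2}\geq\sqrt{2}.
\end{equation}
Since $I_{\mathrm{loc}}=4p(1-p)=1-(2p-1)^2\leq1<\sqrt{2}$, this behavior violates the MD-LHV-LHS inequality for every $p\in[0,0.5]$; in particular at $p=0.5$ it gives $I_{\alpha_1}^{\alpha_2}=\sqrt{2}>1=I_{\mathrm{loc}}$, recovering the known maximal value for the inequality of \cite{girdhar2016all}. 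That establishes the proposition.

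Two remarks on scope and on where any difficulty actually lies. For the bare existence claim the example above is enough; to say more, one can, for each fixed $p$, optimize Bob's and Alice's measurement directions over the Bloch sphere (writing $\langle x_iy_j\rangle=\hat a_i^{\t}T\hat b_j$ with $T$ the state's correlation matrix) to maximize $I_{\alpha_1}^{\alpha_2}-I_{\mathrm{loc}}$, and one can replace $\ket{\Phi^+}$ by a Werner family $v\,\ket{\Phi^+}\!\bra{\Phi^+}+(1-v)\,\mathbbm{1}/4$ to locate the visibility threshold $v(p)$ below which no quantum violation survives. The only genuinely delicate point is conceptual rather than computational: because $I_{\mathrm{loc}}\to0$ as $p\to0$, one must be confident that Eq.~\eqref{eq:steeringInequalityBob1} remains a valid necessary condition for an MD-LHV-LHS model in that regime — which is precisely what the appendix derivation supplies — so that exceeding $4p(1-p)$ with a physical behavior legitimately rules out every such model with the stipulated measurement-dependence budget. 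I expect the bulk of the write-up to be this consistency check, together with (optionally) the accompanying threshold analysis.
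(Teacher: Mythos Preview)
Your argument is correct. The explicit CHSH configuration on $\ket{\Phi^+}$ yields the correlators you state, and the computation $\alpha_1=\alpha_2=\tfrac12\big(1+(2p-1)^2\big)$ is accurate; the resulting strict inequality $I_{\alpha_1}^{\alpha_2}\geq\sqrt{2}>1\geq 4p(1-p)$ indeed settles the existence claim for every $p\in[0,0.5]$.

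Your route differs from the paper's. The paper considers the one-parameter family $\cos\theta\ket{00}-\sin\theta\ket{11}$ with arbitrary Bloch-sphere measurement directions on both sides and numerically optimizes $I_{\alpha_1}^{\alpha_2}$ over $\theta$ and the directions at each fixed $p$, reporting the result as a plot against the MD-LHV-LHS bound. That approach delivers more than bare existence --- it quantifies the \emph{maximum} quantum value of $I_{\alpha_1}^{\alpha_2}$ as a function of $p$ --- but as a proof of the proposition it is computational and relies on the figure. Your approach is more elementary and self-contained: one fixed state, one fixed set of observables, a closed-form evaluation, and a one-line comparison showing violation uniformly in $p$. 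What you give up is the optimal-violation curve; what you gain is an analytic, figure-independent certificate. Your closing remark about optimizing over directions and visibilities is exactly the direction the paper actually takes, so you have also correctly anticipated how to extend the bare existence result.
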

\begin{proof}
Let Alice and Bob share a two-qubit entangled state $\cos \theta \ket{00} - \sin \theta \ket{11}$ and perform measurements
\begin{eqnarray}
    &&x_1 = \hat{\bm{n}}_1 . \bm{\sigma};~~~  x_2 = \hat{\bm{n}}_2 . \bm{\sigma}, \nonumber \\
    &&y_1 = \hat{\bm{m}}_1 . \bm{\sigma};~~~  y_2 = \hat{\bm{m}}_2 . \bm{\sigma}, \nonumber
\end{eqnarray}    
where $\hat{\bm{m}}_1,\hat{\bm{m}}_2,\hat{\bm{n}}_1,\hat{\bm{n}}_2$ are the measurement directions of Alice and Bob and $\bm{\sigma} = \sigma_x\hat{x}+\sigma_y\hat{y}+\sigma_z\hat{z}$, with $\sigma_j$ being Pauli matrices. By varying over the state parameter $\theta$ and measurement directions of Alice and Bob, we obtain the maximum achievable value of $I_{\alpha_1}^{\alpha_2}$ for a given value of $p$ and plot it in Fig.~\ref{fig:critp}. 
\end{proof}
There exists a class of behaviors called the no-signalling behaviors which are constrained only by the no-signalling conditions~\cite{cirelson80,PR94}. The no-signalling conditions imply that the local marginal probabilities of one of the parties are independent of the measurement settings of the other party. An important class of no-signalling behaviors is the set of PR box correlations~\cite{PR94}. In the bipartite two-input-two-output scenario, they form the nonlocal vertices of the no-signalling polytope~\cite{BCP+14}.  
These correlations return the algebraic maximum value of $4$ for the CHSH expression~\cite{CHSH69} and cannot be obtained from performing measurements on quantum states. 
The value of the MD-LHV-LHS operator $I_{\alpha_1}^{\alpha_2}$ for the PR box correlations is given by
\begin{eqnarray}
    I_{\text{PR-box}} =  2 \sqrt{2 - 4p (1-p)},
\end{eqnarray}
where $p$ is the measurement dependence parameter for Alice. 
In the limiting case of $p = 0.5$, we have $I_{\text{PR}} = 2$ while for $p \rightarrow 0$, we have $I_{\text{PR}} \rightarrow 2\sqrt{2}$. We plot the variation of $I_{\text{PR}}$ as a function of $p$ in Fig.~\ref{fig:critp}. 
\begin{figure}
    \centering
    \includegraphics[scale=0.6]{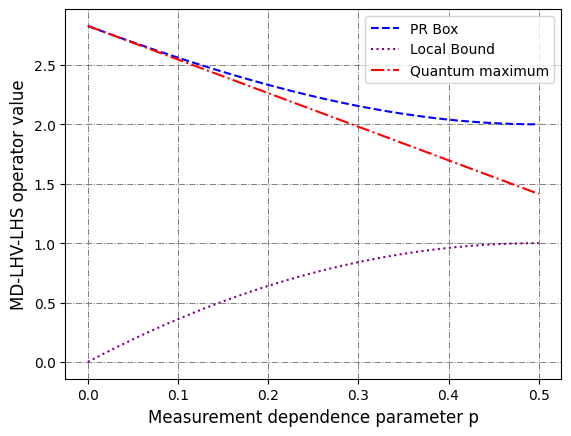}
    \caption{In this figure, we plot the maximum achievable value of $I_{\alpha_1}^{\alpha_2}$ as a function of the measurement dependence parameter $p$ of Alice for (a) the MD-LHV-LHS models (in yellow) (b) quantum states of the form $\cos \theta \ket{00} - \sin \theta \ket{11}$ (in purple). We also plot the value of $I_{\alpha_1}^{\alpha_2}$ as a function of $p$ for PR box correlations (in blue).}
    \label{fig:critp}
\end{figure}

Another important class of nonlocal behaviors is the set of quantum extremal correlations. Such correlations maximally violate a Bell inequality and are useful for device-independent information processing tasks. We present the amount of violation of $I_{\alpha_1}^{\alpha_2}$ required to ensure there is no MD-LHV-LHS model for a class of quantum extremal behaviors.
\begin{observation}
Consider the behavior $\mathbf{P}_1$ obtained from the state $\rho_{AB} = \op{\psi_1}$ where $\ket{\psi_1} = \frac{1}{\sqrt{2}} (\ket{00} +  \ket{11})$ and the choice of measurement setting 
\begin{eqnarray}
    &&x_1 = \sigma_z;~~~  x_2 = -\sin \delta~\sigma_z + \cos \delta~\sigma_x, \nonumber \\
    &&y_1 = \sigma_x;~~~  y_2 = \cos \delta~\sigma_z - \sin \delta~\sigma_x, \nonumber
\end{eqnarray}   
where $0 < \delta \leq \pi/6$. The value of the CHSH operator for the behavior $\mathbf{P}_1$ is given by $\text{CHSH}_\delta = 2 \cos \delta (1 + \sin \delta)$~\cite{WBC22} and $\mathbf{P}_1$ achieves the quantum bound of the tilted Bell expression
\begin{eqnarray}
    I_\delta \coloneqq \langle A_0B_0 \rangle + \frac{1}{\sin \delta}(\langle A_0B_1 \rangle + \langle A_1B_0 \rangle) - \frac{1}{\cos 2\delta} \langle A_1B_1 \rangle. \nonumber \\
\end{eqnarray}
Hence, $\mathbf{P}_1$ is located on the quantum boundary. 
The value of the MD-LHV-LHS operator $I_{\alpha_1}^{\alpha_2}$ for $\mathbf{P}_1$ is given by 
\begin{eqnarray}
    I_1 = &&\sqrt{\cos ^2(\delta ) \left((2 (p-1) \sin (\delta )+p)^2+(p-1)^2\right)} \nonumber \\ 
    && +\sqrt{\cos ^2(\delta ) \left((p-2 (p-1) \sin (\delta ))^2+(p-1)^2\right)}. \nonumber \\
\end{eqnarray}
We denote the amount of violation of the MD-LHV-LHS inequality by the behavior $\mathbf{P}_1$ as $\Delta_1 = I_1 - \eta_p$ where $\eta_p = 4 p (1-p)$ is the local bound. The variation of amount of $\Delta_1$ as a function of the measurement dependence parameter $p$ for different quantum extremal behaviors is plotted in Fig.~\ref{fig:randomess}. 
We observe high values of measurement dependence (denoted by low values of $p$) require a higher violation of the MD-LHV-LHS inequality.
\begin{figure}
    \centering
    \includegraphics[scale=0.6]{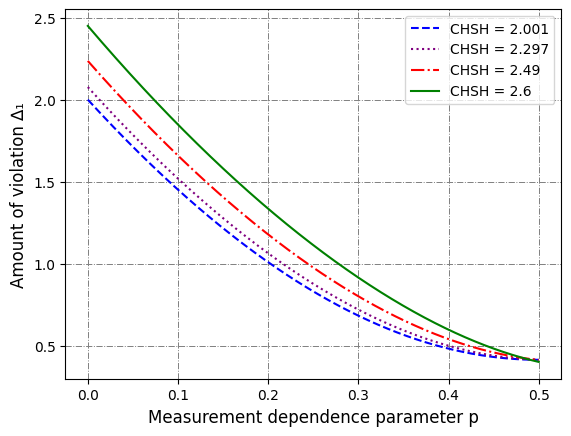}
    \caption{The amount of violation ($\Delta_1$) of the MD-LHV-LHS inequality as a function of the measurement dependence parameter $p$ for different quantum extremal behaviors.}
    \label{fig:randomess}
\end{figure}
\end{observation}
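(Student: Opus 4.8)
The plan is to evaluate the behavior $\mathbf{P}_1$ in closed form and then verify each of the three assertions by direct substitution. For the maximally entangled state $\ket{\psi_1}$ the two-point correlation tensor is diagonal, $\langle(\hat{\bm a}\cdot\bm\sigma)\otimes(\hat{\bm b}\cdot\bm\sigma)\rangle = a_x b_x - a_y b_y + a_z b_z$, and all single-party marginals vanish. Since every measurement direction listed lies in the $x$--$z$ plane, the $y$-components drop out and the four correlators follow at once: $\langle x_1 y_1\rangle = 0$, $\langle x_1 y_2\rangle = \langle x_2 y_1\rangle = \cos\delta$, and $\langle x_2 y_2\rangle = -\sin 2\delta$. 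With these in hand the CHSH claim is immediate: taking $A_0 = x_1$, $A_1 = x_2$, $B_0 = y_1$, $B_1 = y_2$ gives $\langle x_1 y_1\rangle + \langle x_1 y_2\rangle + \langle x_2 y_1\rangle - \langle x_2 y_2\rangle = 2\cos\delta + \sin 2\delta = 2\cos\delta(1+\sin\delta)$.

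Next I would address the claim that $\mathbf{P}_1$ lies on the quantum boundary, i.e.\ that it saturates the Tsirelson value of the tilted Bell expression $I_\delta$. Substituting $\langle A_0 B_0\rangle = 0$, $\langle A_0 B_1\rangle = \langle A_1 B_0\rangle = \cos\delta$, $\langle A_1 B_1\rangle = -\sin 2\delta$ into $I_\delta$ yields a definite trigonometric value; it then remains to show this equals the maximal quantum value of $I_\delta$. I would do this by invoking the known maximal quantum violation of this tilted family as established in \cite{WBC22}; alternatively, a self-contained argument uses Jordan's lemma to reduce an arbitrary quantum strategy to a direct sum of qubit strategies and then maximizes the single-qubit expression over the Bloch circle, an elementary optimization whose optimum is attained precisely by the planar configuration above. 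The constraint $0 < \delta \le \pi/6$ is what keeps this configuration optimal and keeps the coefficient $1/\cos 2\delta$ well defined.

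For the value of the MD-LHV-LHS operator I would simply insert the four correlators into the definitions of $\alpha_1$ and $\alpha_2$ from the preceding Proposition. Using $\sin 2\delta = 2\sin\delta\cos\delta$ one factors $\cos^2\delta$ out of each squared term, obtaining $\alpha_1 = \cos^2\delta\big[(2(p-1)\sin\delta + p)^2 + (p-1)^2\big]$ and $\alpha_2 = \cos^2\delta\big[(p - 2(p-1)\sin\delta)^2 + (p-1)^2\big]$, so that $I_1 = \sqrt{\alpha_1} + \sqrt{\alpha_2}$ reproduces the stated formula. The local bound $\eta_p = 4p(1-p)$ is read directly from Eq.~\eqref{eq:steeringInequalityBob1}, and $\Delta_1 = I_1 - \eta_p$ by definition; the qualitative statement that smaller $p$ forces larger $\Delta_1$ then follows by inspecting these expressions (or from the plot).

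The only step that is not a mechanical substitution is the boundary claim in the second paragraph: verifying that $\mathbf{P}_1$ genuinely achieves the Tsirelson bound of $I_\delta$ rather than merely some sub-optimal value, and understanding why $\delta \le \pi/6$ is exactly the range in which the maximally entangled state with these planar measurements remains optimal. Everything else — the correlators, the CHSH value, and the closed form for $I_1$ — is routine algebra once the correlation tensor of $\ket{\psi_1}$ is written down.
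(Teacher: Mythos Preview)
Your proposal is correct and follows the same route the paper implicitly takes. In the paper this result is stated as an \emph{Observation} with no separate proof: the CHSH value and the quantum-boundary claim are attributed to~\cite{WBC22}, and the formula for $I_1$ is simply asserted as the outcome of substituting the correlators into the $\alpha_1,\alpha_2$ of the preceding Proposition. Your plan makes exactly these substitutions explicit (correlation tensor of $\ket{\psi_1}$, then plug into CHSH, $I_\delta$, and $I_{\alpha_1}^{\alpha_2}$), so there is no methodological difference to report.
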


\section{Application: Randomness Extraction} \label{sec:Randomness}
Randomness acts as a primitive resource for various desirable information processing. Also, the generation and certification of private randomness also constitute a major goal of cryptographic protocols. In this section, we present the amount of violation of the MD-LHV-LHS inequalty for quantum extremal behaviors providing different amounts of randomness. 

In the device-independent (DI) scenario, Eve may have supplied the devices of Alice and Bob and hold a purifying system $E$ such that the post-measurement systems of Alice, Bob, and Eve are correlated. We define the joint system of Alice-Bob-Eve using the density operator $\rho_{ABE}$ such that the state of Alice and Bob is given by $\rho_{AB} = \Tr_E [\rho_{ABE}]$. After Alice and Bob perform measurements $x$ and $y$ on their respective sub-systems, we obtain the classical-quantum state $\Tilde{\rho}_{ABE} = \sum_{ab} \op{ab}_{AB} \otimes \rho_{E}^{abxy}$, where $\rho_{E}^{abxy} = \Tr_{AB} [(\Lambda_a^x \otimes \Lambda_b^y \otimes \mathbbm{1}_E) \rho_{ABE}]$ is the state of Eve conditioned on the joint measurement outcomes of Alice and Bob. We obtain the conditional joint probability distribution of the measurement outputs conditioned on the inputs as $p(ab|xy) = \Tr[\rho_{E}^{abxy}]$. 

The amount of DI global randomness for spot-checking DI randomness generation protocols is obtained from the conditional von Neumann entropy $H(ab|x_1,y_1,E)$ evaluated for the state $\Tilde{\rho}_{ABE}$ when Alice and Bob choose measurement settings $x_1$ and $y_1$ respectively~\cite{BRC23}. The asymptotic rate of randomness generation $r$ is the lower bound on the conditional von Neumann entropy for all states and measurement settings that are compatible with the observed behavior $\mathbf{P}_{\text{obs}}$, or a linear function $f$ of $\mathbf{P}_{\text{obs}}$. The asymptotic rate of randomness generation $r$ in bits per round is given by
\begin{eqnarray} \label{eq:randomnessRate}
    r = \inf_{\substack{\rho_{ABE}, \\ \{\Lambda_a^x\}, \{\Lambda_b^y\} \\ \text{compatible with } f(\mathbf{P}_{\text{obs}})}} H(ab|x_1,y_1,E)_{\Tilde{\rho}_{ABE}},
\end{eqnarray}
where $f(\mathbf{P}_{\text{obs}})$ is the Bell value. 

Consider the behavior $\mathbf{P}_2$ obtained from the state $\rho_{AB} = \op{\psi_1}$ where $\ket{\psi_1} = \frac{1}{\sqrt{2}} (\ket{00} +  \ket{11})$ and the choice of measurement settings
\begin{eqnarray}
    &&\Tilde{x}_1 = \sigma_z, ~~~~\Tilde{y}_1 = \sin 3\gamma~\sigma_z + \cos 3\gamma~\sigma_x, \nonumber \\
    && \Tilde{x}_2 = \cos(\frac{2\pi}{3} - 2\gamma) \sigma_z + \sin(\frac{2\pi}{3} - 2\gamma) \sigma_x, \nonumber \\
    && \Tilde{y}_2 = \cos(\frac{\pi}{6} + \gamma) \sigma_z - \sin(\frac{\pi}{6} + \gamma) \sigma_x,
\end{eqnarray}
where $0 \leq \gamma \leq \pi/12$. The amount of randomness obtained from $\mathbf{P}_2$ is given by~\cite{WBC22} 
\begin{eqnarray}
    r = 1 + H_{\text{b}}\bigg[\frac{1}{2} + \frac{s}{2} - \frac{3}{\sqrt{2}}\cos(\frac{1}{3} \arccos\bigg[-\frac{s}{2\sqrt{2}}\bigg])\bigg], \nonumber \\
\end{eqnarray}
where $s = \sin 3\gamma + 3 \cos(\gamma + \pi/6)$ and $H_\text{b}[.]$ denotes the binary entropy.

We denote the amount of violation of the MD-LHV-LHS inequality by $\mathbf{P}_2$ as $\Delta_2 = I_2 - \eta_p$ where $I_2$ is the value of the operator $I_{\alpha_1}^{\alpha_2}$ for the behavior $\mathbf{P}_2$ and $\eta_p = 4p(1-p)$ is the local bound. The variation of $\Delta_2$ as a function of the measurement dependence parameter p for different amounts of randomness $r$ is plotted in Fig.~\ref{fig:randomess2}. We observe that high values of measurement dependence (denoted by low values of $p$) and also high values of $r$ require a higher violation of the MD-LHV-LHS inequality.

\begin{figure}
    \centering
    \includegraphics[scale=0.55]{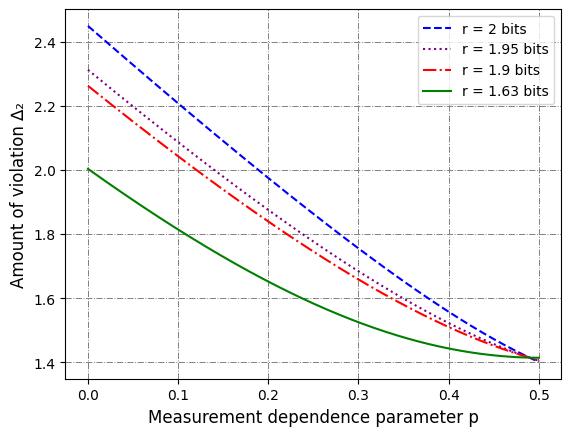}
    \caption{The amount of violation ($\Delta_2$) of the MD-LHV-LHS inequality as a function of the measurement dependence parameter $p$ for different amounts of randomness.}
    \label{fig:randomess2}
\end{figure}

\section{Discussion} \label{sec:discussion}
In this paper, we analyse the implications of relaxing the measurement independence assumption in the bipartite steering scenario. We show that the joint probability distribution obtained by performing local measurements on an MD-LHS assemblage has an MD-LHV model. This implies that the MD-unsteerable states are MD-local. This naturally raises the question of whether all MD-local states are MD-unsteerable, a problem we leave for future research. 

We introduce measurement-dependent weight as a parameter to quantify the measurement dependence for one of the parties in the bipartite steering scenario. In an adversarial scenario, the adversary may prepare the observed assemblage via a convex combination of a steerable assemblage and an MD-unsteerable assemblage. In such a scenario, we present a bound on the measurement-dependent weight such that the observed assemblage is steerable.

For the behavior obtained from joint measurements on the states shared by two parties, we provide an inequality to test for MD-LHV-LHS models. We show that there exist quantum states that violate such an inequality. We also quantify the amount of violation as a function of the measurement dependence parameter of one of the parties such that the obtained behavior contains two bits of randomness. The obtained inequality is expected to have applications in the self-testing of quantum states and other device-independent information processing protocols, like randomness generation and secure communication.

In an experiment, the joint probability distribution describing the experiments' runs may be known by checking the frequencies over multiple runs. The results of this work hold if the runs of the experiment are independent and identically distributed, i.e., the device has no memory. A possible future direction would be to consider the effect of device memory in tests for MD-LHV-LHS models. 

Another possible future direction would be to study the amount of random bits that can be extracted on relaxing the free-will assumption in the tests of genuine multipartite steering and nonlocality~\cite{HR13,HWD22}. 

\begin{acknowledgements}
The authors thank Prerna Rao for the discussion and comments on the manuscript. A.S. thanks Antonio Ac\'{\i}n, Stefan B\"auml, and Arnab Ghorui for the discussions. %A.S. acknowledges the Ph.D. fellowship from the Raman Research Institute, Bangalore.
S.D. acknowledges support from the Science and Engineering Research Board, Department of Science and Technology (SERB-DST), Government of India under Grant No.~SRG/2023/000217. S.D. also thanks IIIT Hyderabad for the Faculty Seed Grant.
\end{acknowledgements}

\appendix
\section{The MD-nonlocality scenario}
We consider the Bell scenario where two parties, Alice and Bob, share a quantum state $\rho_{AB}$. Each party can perform one of two measurements available to them. We attribute POVM $\{\Lambda_a^x\}_x$ to Alice and $\{\Lambda_b^y\}_y$ to Bob. Each of these measurements can have two outcomes denoted by $a$ and $b$ for Alice and Bob respectively. The measurement outcome statistics can be described by the probability distribution $\mathbf{P} = \{p(ab|xy)\}$, where $p(ab|xy) = \Tr[\Lambda_a^x \otimes \Lambda_b^y \rho_{AB}]$. The probability distribution $\mathbf{P}$ is also called behavior. 

Let there exist some hidden variable $\lambda$ such that the probability distribution of the outputs conditioned on the inputs can be expressed as
\begin{equation} \label{eq:MDL}
    p(ab|xy) = \int d\lambda~p(ab|xy\lambda)~p(\lambda|xy). 
\end{equation}
The hidden variable $\lambda$ provides an explanation of the observed experimental statistics. An adversary can bias the choice of measurement settings of Alice (or Bob) on the $\lambda$ scale according to
\begin{equation} \label{eq:measurementDep}
    p(x) = \sum_{\lambda} p(x|\lambda) p(\lambda)
\end{equation}
and trick them into believing that they are choosing all the settings with equal probability. As an example, let $\lambda$ take two values $\lambda_1$ and $\lambda_2$ distributed according to $p(\lambda_1) = \sin^2 (\delta_\lambda)$ and $p(\lambda_2) = \cos^2 (\delta_\lambda)$ respectively. In the simplest scenario where $x \in \{1, 2\}$, let the probability of choosing $x$ be distributed according to Table.~\ref{Table:distributionSetting}.  

a. For $\theta_\lambda = 0.3$, $\phi_\lambda = 2.051$, and $\delta_\lambda = 2.447$, it is observed that $p(x) = 0.5$

b. For $\theta_\lambda = 0.7$, $\phi_\lambda = 2.179$, and $\delta_\lambda = 0.96$, it is observed that $p(x) = 0.5$

The above examples show that by properly choosing parameters from Table~\ref{Table:distributionSetting}, an adversary can trick Alice (or Bob) into thinking they have free-will in choosing the measurement settings. For Alice (or Bob) choosing settings with unequal probability, i.e., $p(x) \neq 0.5$, the adversary may adjust the parameters of Table~\ref{Table:distributionSetting} accordingly.  

\begin{table}[t]
    \begin{center}
\begin{tabular}{ m{0.9cm} | m{3cm}  m{3cm}}
  & Distribution 1 $\lambda_1$ & Distribution 2 $\lambda_2$ \\
  \hline
  $p(1|\lambda)$ & $\cos^2 (\theta_\lambda)$ & $\cos^2 (\phi_\lambda)$  \\
  $p(2|\lambda)$ & $\sin^2 (\theta_\lambda)$ & $\sin^2 (\phi_\lambda)$  \\
\end{tabular} 
\caption{Probability distribution for choice of settings by Alice (or Bob) based on the hidden variable $\lambda$. An adversary can use this distribution, and by suitably choosing the parameters of the table the adversary can trick Alice (or Bob) into thinking they have free-will
in choosing the measurement settings.}
\label{Table:distributionSetting}
\end{center}
\end{table}

In a scenario where there is a bias in the choice of measurement settings in the $\lambda$ scale which Alice and Bob are unaware of, the following constraints can be imposed on the conditional joint probability distribution:

a. Signal locality imposes the factorisability constraint on the conditional joint probability distribution. 
\begin{equation}
        p(ab|xy\lambda) = p(a|x\lambda) p(b|y\lambda)
\end{equation}  

b. The measurement independence assumption requiring that $\lambda$ does not contain any information about $x$ and $y$, which implies
\begin{eqnarray}
        &&p(\lambda|xy) = p(\lambda) \\
        \text{or equivalently,} &&p(xy|\lambda) = p(xy).
\end{eqnarray}

\section{Measurement-dependent steering weight} \label{app:steering-weight}
It follows from Eq.~\eqref{eq:steringMixture} that 
\begin{eqnarray}
    \gamma_{a|x}^{\text{st}} = \frac{1}{1-\eta^{a|x}}\bigg[\sigma_{a|x} - \eta^{a|x}~\sigma_{a|x}^{\text{MD-LHS}}\bigg] \geq 0,
\end{eqnarray}
which is equivalent to stating $\sigma_{a|x} - \eta^{a|x}~\sigma_{a|x}^{\text{MD-LHS}} \geq 0$. Expressing $\sigma_{a|x}^{\text{LHS}}$ as in Eq.~\eqref{eq:steeringDef}, we obtain
\begin{eqnarray}
    \sigma_{a|x} \geq \eta^{a|x} \int d\lambda p(\lambda|x) p(a|x\lambda) \rho_{\lambda|x}.
\end{eqnarray}
Noting that $\sigma_{a|x} = p(a|x) \rho_{a|x}$ and taking trace on both sides, we obtain
\begin{eqnarray}
    p(a|x) \geq \eta^{a|x} \int d\lambda p(\lambda|x) p(a|x\lambda). \label{eq:prop1-1}
\end{eqnarray}
It follows from Eq.~\eqref{eq:prop1-1} that
\begin{eqnarray}
    && p(+|x_1) - p(+|x_2) \geq \int d\lambda~\bigg[ \eta^{+|x_1} p(\lambda|x_1) p(+|x_1\lambda) \nonumber \\
    && \qquad \qquad \qquad \qquad \qquad - \eta^{+|x_2} p(\lambda|x_2) p(+|x_2\lambda)\bigg] \label{eq:prop1-2}\\
    &&\text{and} \nonumber \\
    && p(+|x_1) - p(+|x_2) \leq \int d\lambda~\bigg[ \eta^{-|x_1} p(\lambda|x_1) p(-|x_1\lambda) \nonumber \\
    && \qquad \qquad \qquad \qquad \qquad - \eta^{-|x_2} p(\lambda|x_2) p(-|x_2\lambda)\bigg], \nonumber \\ \label{eq:prop1-3}
\end{eqnarray}
where in the last inequality, we have taken $p(+|x_1) = 1 - p(-|x_1)$ and $p(+|x_2) = 1 - p(-|x_2)$. It then follows from Eq.~\eqref{eq:prop1-2} and Eq.~\eqref{eq:prop1-3} that 
\begin{eqnarray}
    &&\int d\lambda \bigg[ p(\lambda|x_1) \bigg\{ \eta^{+|x_1} p(+|x_1 \lambda) - \eta^{-|x_1} p(-|x_1 \lambda) \bigg\} \nonumber \\
    &&-p(\lambda|x_2) \bigg\{ \eta^{+|x_2} p(+|x_2 \lambda) - \eta^{-|x_2} p(-|x_2 \lambda) \bigg\} \bigg] \leq 0. \nonumber 
\end{eqnarray}
With the observation that $p(-|x_j\lambda) = 1 - p(+|x_j\lambda)$ and rearranging we have
\begin{eqnarray}
    &&\int d\lambda p(\lambda|x_1) p(+|x_1\lambda) \big(\eta^{+|x_1} + \eta^{-|x_1}\big) \nonumber \\
    && \qquad - \int d\lambda p(\lambda|x_2) p(+|x_2\lambda) \big(\eta^{+|x_2} + \eta^{-|x_2}\big) \nonumber \\
    && \qquad \qquad - \eta^{-|x_1} \int d\lambda \bigg( p(\lambda|x_1) \nonumber \\
    && \qquad \qquad \qquad \qquad \qquad - \frac{\eta^{-|x_2}}{\eta^{-|x_1}} p(\lambda|x_2) \bigg) \leq 0.
\end{eqnarray}
Invoking Eq.~\eqref{eq:md-para} and properties of conditional joint probability in the above inequality, we obtain
\begin{eqnarray}
    &&\int d\lambda p(+\lambda|x_1) \big(\eta^{+|x_1} + \eta^{-|x_1}\big) \nonumber \\
    && \qquad - \int d\lambda p(+\lambda|x_2) \big(\eta^{+|x_2} + \eta^{-|x_2}\big) - \eta^{-|x_1} \mathscr{W}  \leq 0. \nonumber \\
\end{eqnarray}
Next, on simplifying, we obtain
\begin{eqnarray}
    &&\mathscr{W} \leq \frac{1}{\eta^{-|x_1}} \bigg[ p(+|x_2)(\eta^{+|x_2} + \eta^{-|x_2}) \nonumber \\
    &&\qquad \qquad \qquad \quad - p(+|x_1)(\eta^{+|x_1} + \eta^{-|x_1})\bigg].
\end{eqnarray}

\section{MD-LHV-LHS Inequality} \label{app:MD-LHV-LHS}
The set of correlations in Eq.~\eqref{eq:steeringInequality1} forms a convex set and can be expressed in terms of extremal points as
\begin{eqnarray}
    p(ab|xy) = &&\sum_{\chi} \int d\xi~\delta_{a,\tau(x,\chi)}\Tr[\Lambda_b^y \rho_\lambda]. \nonumber \\
    &&\qquad \qquad \bigg[\sum_{\lambda} p(\chi\xi|\lambda x y
    ) p(\lambda|xy)\bigg] \\
    = && \sum_{\chi} \int d\xi~\delta_{a,\tau(x,\chi)}\Tr[\Lambda_b^y \rho_\lambda]~p(\chi\xi|xy), \nonumber \\
\end{eqnarray}
where $\Lambda_{b}^y$ is the projector of Bob that produces outcome $b$ for measurement $y$ on the state $\rho_\lambda = \op{\psi_\xi}$. The parameter $\xi$ determines the state $\ket{\psi_\xi}$ for Bob and $\chi$ determines the value of $a$ from the function $\tau(x,\chi)$. 

For an MD-LHV-LHS model, the correlations of Alice and Bob can be reproduced if there exists a probability distribution $p(\chi\xi|xy)$ such that 
\begin{eqnarray}
    \langle xy \rangle &&= \sum_{\chi} \int d\xi p(\chi\xi|xy) \bigg[p_+^x(\chi) p_+^y(\xi) + p_-^x(\chi) p_-^y(\xi) \nonumber \\
    && \qquad - p_+^x(\chi) p_-^y(\xi) - p_-^x(\chi) p_+^y(\xi) \bigg] \\
    && = \sum_{\chi} \int d\xi p(\chi\xi|xy) [2p^x_+(\chi) - 1] [2p^y_+(\xi) -1]\nonumber\\ 
    && = \frac{1}{p(xy)} \sum_{\chi} \int d\xi p(\chi\xi) \bigg\{p(xy|\chi\xi) [2p^x_+(\chi) - 1] \nonumber \\\
    && \qquad \qquad \qquad \qquad \qquad \qquad [2p^y_+(\xi) -1]\bigg\}, \label{eq:expecXY}
\end{eqnarray}
where $p_a^x(\chi) = p(a|x\chi)$ and $p_b^y(\xi) = p(b|y\psi_\xi)$. In the second equality we have used $p^x_+(\chi) + p^x_-(\chi) = 1$ and $p^x_+(\xi) + p^x_-(\xi) = 1$ while in the third equality we have used $p(\chi\xi|xy) = p(\chi\xi)~p(xy|\chi\xi)/p(xy)$. 

Let Alice have four extreme values of $\chi \in \{\chi_1,\chi_2,\chi_3,\chi_4\}$ corresponding to $p_+^{x_1} = p_+^{x_2} = 1$, $p_+^{x_1} = p_+^{x_2} = 0$, $p_+^{x_1} = 1-p_+^{x_2} = 1$, $p_+^{x_1} = 1 - p_+^{x_2} = 0$. Let $y_1 \text{ and }y_2$ be projective measurements written as $y_1 = 2 \prod_1^{y_1} - \mathbbm{1}$, $\prod_1^{y_1}$ being the projector on the $+1$ eigenstate of $y_1$ and similarly for $y_2$. Following the parametrisation from \cite{cavalcanti2015analog} let us define $\mu = \Tr[\prod_1^{y_1} \prod_1^{y_2}]$. Then $(p_1^{y_1}(\xi),p_1^{y_2}(\xi))$ form an ellipse parametrised as
\begin{eqnarray}
    &&2 p_1^{y_1}(\xi) - 1 = \cos(\xi + \beta) \\
    \text{and}~&&2 p_1^{y_2}(\xi) - 1 = \cos(\xi - \beta), 
\end{eqnarray}
with $\beta = \arctan(\frac{\sqrt{1-\mu}}{\sqrt{\mu}}) \in [0,\pi/2]$. It was shown in \cite{girdhar2016all} that if $y_1 \text{ and } y_2$ are dichotomic then $(p_1^{y_1}(\xi),p_1^{y_2}(\xi))$ form an ellipse. For each value of $\chi$ and $\xi$, the correlations in Eq~\eqref{eq:expecXY} can be expressed as shown in Table~\ref{Table:correlationLHVLHS}.
\begin{table}[t]
    \begin{center}
\begin{tabular}{ m{0.9cm} | m{1.5cm} | m{2cm} | m{2cm} | m{2cm}}
  & $\chi = \chi_1$ & $\chi = \chi_2$ & $\chi = \chi_3$ & $\chi = \chi_4$ \\
  \hline
$\langle x_1 y_1 \rangle$ & $p(11|\chi_1\xi) \newline \cos(\xi+\beta)$ & $(-1) p(11|\chi_2\xi) \newline \cos(\xi+\beta)$ & $p(11|\chi_3\xi) \newline \cos(\xi+\beta)$ & $(-1) p(11|\chi_4\xi) \newline \cos(\xi+\beta)$ \\
\hline
$\langle x_1 y_2 \rangle$ & $p(12|\chi_1\xi) \newline \cos(\xi-\beta)$ & $(-1) p(12|\chi_2\xi) \newline \cos(\xi-\beta)$ & $p(12|\chi_3\xi) \newline \cos(\xi-\beta)$ & $(-1) p(12|\chi_4\xi) \newline \cos(\xi-\beta)$ \\
\hline
$\langle x_2 y_1 \rangle$ & $p(21|\chi_1\xi) \newline \cos(\xi+\beta)$ & $(-1) p(21|\chi_2\xi) \newline \cos(\xi+\beta)$ & $(-1) p(21|\chi_2\xi) \newline \cos(\xi+\beta)$ & $p(21|\chi_4\xi) \newline \cos(\xi+\beta)$ \\
\hline
$\langle x_2 y_2 \rangle$ & $p(22|\chi_1\xi) \newline \cos(\xi-\beta)$ & $(-1) p(22|\chi_2\xi) \newline \cos(\xi-\beta)$ & $(-1) p(22|\chi_3\xi) \newline \cos(\xi-\beta)$ & $p(22|\chi_4\xi) \newline \cos(\xi-\beta)$ \\
\end{tabular} 
\caption{In the above table $p(ij|\chi_k\xi)$ denotes the probability $p(x_iy_j|\chi_k\xi)$.}
\label{Table:correlationLHVLHS}
\end{center}
\end{table}
These correlations have an MD-LHV-LHS model if they can be written as a convex combination of elements of Table~\ref{Table:correlationLHVLHS}. 

Assuming that Bob has a fully characterised system, we have
\begin{eqnarray}
    p(xy|\chi\xi) = p(x|\chi \xi) p(y|x\chi\xi) = \frac{1}{2}~p(x|\chi \xi). \label{eq:AliceMD}
\end{eqnarray}
In the second equality, we assume no measurement dependence for Bob. Inserting Eq.~\eqref{eq:AliceMD} in Table~\ref{Table:correlationLHVLHS} we obtain the correlations presented in Table~\ref{Table:correlationLHVLHSBob}.
\begin{table}[t]
    \begin{center}
\begin{tabular}{ m{0.9cm} | m{1.5cm} | m{1.9cm} | m{1.9cm} | m{1.9cm}}
  & $\chi = \chi_1$ & $\chi = \chi_2$ & $\chi = \chi_3$ & $\chi = \chi_4$ \\
  \hline
$\langle x_1 y_1 \rangle$ & $2p(1|\chi_1 \xi) \newline \cos(\xi+\beta)$ & $(-2) p(1|\chi_2 \xi) \newline \cos(\xi+\beta)$ & $2p(1|\chi_3 \xi) \newline \cos(\xi+\beta)$ & $(-2) p(1|\chi_4 \xi) \newline \cos(\xi+\beta)$ \\
\hline
$\langle x_1 y_2 \rangle$ & $2p(1|\chi_1 \xi) \newline \cos(\xi-\beta)$ & $(-2) p(1|\chi_2 \xi) \newline \cos(\xi-\beta)$ & $2p(1|\chi_3 \xi) \newline \cos(\xi-\beta)$ & $(-2) p(1|\chi_4 \xi) \newline \cos(\xi-\beta)$ \\
\hline
$\langle x_2 y_1 \rangle$ & $2p(2|\chi_1 \xi) \newline \cos(\xi+\beta)$ & $(-2) p(2|\chi_2 \xi) \newline \cos(\xi+\beta)$ & $(-2) p(2|\chi_3 \xi) \newline \cos(\xi+\beta)$ & $2p(2|\chi_4 \xi) \newline \cos(\xi+\beta)$ \\
\hline
$\langle x_2 y_2 \rangle$ & $2p(2|\chi_1 \xi) \newline \cos(\xi-\beta)$ & $(-2) p(2|\chi_2 \xi) \newline \cos(\xi-\beta)$ & $(-2) p(2|\chi_3 \xi) \newline \cos(\xi-\beta)$ & $2p(1|\chi_4 \xi) \newline \cos(\xi-\beta)$ \\
\end{tabular} 
\caption{In the above table $p(i|\chi_j \xi)$ denotes the probability $p(x_i|\chi_j \xi)$.}
\label{Table:correlationLHVLHSBob}
\end{center}
\end{table}
Let $C_j$ be the convex hull of the column with $\chi = \chi_j$ and let $p(1|\chi_1 \xi) = p(1|\chi_2 \xi) = p(1|\chi_3 \xi) = p(1|\chi_4 \xi) = p_1$ and $p(2|\chi_1 \xi) = p(2|\chi_2 \xi) = p(2|\chi_3 \xi) = p(2|\chi_4 \xi) = p_2$. In the basis $\textbf{e}_1 = (1,0,0,0)^T$, $\textbf{e}_2 = (0,1,0,0)^T$, $\textbf{e}_3 = (0,0,1,0)^T$, $\textbf{e}_4 = (0,0,0,1)^T$, the vectors forming the boundary of $C_j$ are given by $\textbf{C}_1 = 2\cos(\xi + \beta)(p_1~\textbf{e}_1 + p_2~\textbf{e}_3) + 2\cos(\xi - \beta)(p_1\textbf{e}_2 + p_2\textbf{e}_4)$, $\textbf{C}_2 = - \textbf{C}_1$, $\textbf{C}_3 = 2\cos(\xi + \beta)(p_1~\textbf{e}_1 - p_2~\textbf{e}_3) + 2\cos(\xi - \beta)(p_1\textbf{e}_2 - p_2\textbf{e}_4)$ and $\textbf{C}_4 = -\textbf{C}_3$. The curve with the parametrisation $u_1 = 2 p_1 cos(\xi + \beta)$, $u_2 = 2 p_1 cos(\xi - \beta)$, $u_3 = 2 p_2 cos(\xi + \beta)$ and $u_4 = 2 p_2 cos(\xi - \beta)$ can be expressed as 
\begin{eqnarray}
    &&\frac{1}{16 p_1^2 p_2^2 \sin^2(2\beta)}\bigg((p_2u_1 + p_1u_3)^2 + (p_2u_2 + p_1u_4)^2 \nonumber \\
    &&- 2(p_2u_1 + p_1u_3)(p_2u_2 + p_1u_4)\bigg) = 1. 
\end{eqnarray}
Also, the curve with the parametrisation $v_1 = 2 p_1 cos(\xi + \beta)$, $v_2 = 2 p_1 cos(\xi - \beta)$, $v_3 = -2 p_2 cos(\xi + \beta)$ and $v_4 = -2 p_2 cos(\xi - \beta)$ can be expressed as 
\begin{eqnarray}
    &&\frac{1}{16 p_1^2 p_2^2 \sin^2(2\beta)}\bigg((p_2v_1 - p_1v_3)^2 + (p_2v_2 - p_1v_4)^2 \nonumber \\
    &&- 2(p_2v_1 - p_1v_3)(p_2v_2 - p_1v_4)\bigg) = 1. 
\end{eqnarray}
A vector that has an MD-LHV-LHS model can be written in the form of a superposition as $\textbf{C} = \textbf{C}_1 + \textbf{C}_2$. Now if $\textbf{C}$ lies in the convex hull of $\textbf{C}_1$ and $\textbf{C}_2$ then $\textbf{C} = \nu_1 \textbf{w}_1 + \nu_2 \textbf{w}_2$ where $\textbf{w}_1 \subseteq \textbf{C}_1$, $\textbf{w}_2 \subseteq \textbf{C}_2$, and $\nu_1,\nu_2 \in (0,1)$. It then follows that $\textbf{w}_1 = \textbf{C}_1/\nu_1$ and $\textbf{w}_2 = \textbf{C}_2/\nu_2$. If we denote $f_1 \leq r_1$ and $f_2 \leq r_2$ as the boundaries of $\textbf{C}_1$ and $\textbf{C}_2$ then $f_1 \leq \nu_1^2~r_1$ and $f_2 \leq \nu_2^2~r_2$ which implies
\begin{eqnarray}
    \sqrt{f_1} + \sqrt{f_2} && \leq \nu_1 \sqrt{r_1} + \nu_2 \sqrt{r_2} \nonumber \\
    && \leq \max \{\sqrt{r_1},\sqrt{r_2}\},
\end{eqnarray}
where the last inequality follows from observing $\nu_1 + \nu_2 = 1$. In terms of the expectation value of the operators, this is equivalent to saying 
\begin{equation}
    \sqrt{\alpha_1} + \sqrt{\alpha_2} \leq 4p_1p_2 \sin 2\beta, \label{eq:steeringInequalityBob}
\end{equation}
where we have
\begin{eqnarray}
    &&\alpha_1 = (p_2\langle x_1y_1\rangle + p_1 \langle x_2y_1\rangle)^2 + (p_2\langle x_1y_2\rangle + p_1 \langle x_2y_2\rangle)^2 \nonumber \\
    &&- 2 (p_2\langle x_1y_1\rangle + p_1 \langle x_2y_1\rangle) \nonumber \\
    &&\qquad \qquad \qquad (p_2\langle x_1y_2\rangle + p_1 \langle x_2y_2\rangle)\cos(2\beta). \\
    &&\alpha_2 = (p_2\langle x_1y_1\rangle - p_1 \langle x_2y_1\rangle)^2 + (p_2\langle x_1y_2\rangle - p_1 \langle x_2y_2\rangle)^2 \nonumber \\
    &&- 2 (p_2\langle x_1y_1\rangle - p_1 \langle x_2y_1\rangle) \nonumber \\
    &&\qquad \qquad \qquad (p_2\langle x_1y_2\rangle - p_1 \langle x_2y_2\rangle)\cos(2\beta). 
\end{eqnarray}
Now choosing $\beta = \pi/4$ and observing $p_1 = 1-p$ and $p_2 = p$, we have from Eq.~\eqref{eq:steeringInequalityBob} the following inequality
\begin{equation}
    \sqrt{\alpha'_1} + \sqrt{\alpha'_2} \leq 4p(1-p),  \label{eq:steeringInequalityBob123}
\end{equation}
where 
\begin{eqnarray}
    &&\alpha'_1 = (p\langle x_1y_1\rangle + (1-p) \langle x_2y_1\rangle)^2 + (p\langle x_1y_2\rangle \nonumber \\
    && \qquad \qquad \qquad \qquad \qquad  + (1-p) \langle x_2y_2\rangle)^2 \label{eq:alpha1}\\
    &&\alpha'_2 = (p\langle x_1y_1\rangle - (1-p) \langle x_2y_1\rangle)^2 + (p\langle x_1y_2\rangle \nonumber \\
    &&\qquad \qquad \qquad  \qquad \qquad - (1-p) \langle x_2y_2\rangle)^2. \label{eq:alpha2}
\end{eqnarray}

\bibliographystyle{apsrev4-1}
\bibliography{paper}{}
\end{document}